\documentclass[letterpaper, 10 pt, conference]{ieeeconf}

\IEEEoverridecommandlockouts                              

\usepackage{tikz}
\usepackage{textcomp}

\usetikzlibrary{matrix}
\usetikzlibrary{angles,quotes}

\usepackage{graphics,graphicx} 
\usepackage{epsfig} 
\usepackage{mathptmx} 
\usepackage{times} 
\usepackage{amsmath} 
\usepackage{amssymb,nth, mathtools,dsfont}  
\usepackage{amsbsy}
\usepackage{epstopdf}
\usepackage[noadjust]{cite}
\usepackage{romannum}
\usepackage{xcolor}
\usepackage{subcaption}
\usepackage{multirow}
\usepackage{comment}
\usepackage[normalem]{ulem}

\newcommand{\bmtx}{\begin{bmatrix}}
\newcommand{\emtx}{\end{bmatrix}}
\newcommand{\bsmtx}{\left[ \begin{smallmatrix}} 
\newcommand{\esmtx}{\end{smallmatrix} \right]}
\newcommand{\field}[1]{\mathbb{#1}}
\newcommand{\R}{\field{R}}

\newcommand{\N}{\field{N}}

\newcommand{\Sym}{\field{S}}

\newcommand{\svn}[1]{{\color{blue}{(SVN: #1)}}}

\newtheorem{theorem}{Theorem}

\newtheorem{lemma}{Lemma}

\newtheorem{remark}{Remark}
\newtheorem{defin}{Definition}

\newtheorem{assumption}{Assumption}

\title{Data-Driven Stability and Performance Analysis of Lurye Systems}

\author{Sahel Vahedi Noori and Peter Seiler
	\thanks{S. Vahedi Noori and P. Seiler are with the Department of Electrical Engineering \& Computer Science at the University of Michigan ({\tt\small sahelvn@umich.edu} and
		{\tt\small pseiler@umich.edu}).
  }
	}

\begin{document}

\maketitle

\begin{abstract}

This paper develops data-driven conditions for certifying internal stability and induced-$\ell_2$ performance of discrete-time Lurye systems. The Lurye system is an interconnection of a nominal LTI system in feedback with a static, memoryless nonlinearity.  The nonlinearity satisfies a known set of quadratic constraints on the inputs and outputs. Existing conditions for Lurye systems require a state-space realization of the nominal LTI dynamics. Our first data-driven result instead formulates the stability and performance conditions using finite input, output, and state trajectories of the nominal LTI block. Our second condition removes the need for measured state trajectories by reconstructing the state sequence, up to a similarity transformation, from input/output data. This state reconstruction is performed using deterministic  subspace-identification techniques. Both data-driven conditions are expressed as convex semidefinite programs. These conditions, given sufficiently exciting inputs, recover the corresponding model-based Lurye condition in the noiseless setting.  The proposed methods are illustrated via a simple example with a sector-bounded  nonlinearity.  Both proposed methods obtain the same induced-$\ell_2$ gain bound as the model-based approach while using only trajectory data from the nominal system.

\end{abstract}



\section{Introduction}

The analysis and control of dynamical systems in the presence of uncertainty and nonlinearities is a central problem in control theory. A widely studied class of systems is given by Lurye-type interconnections, consisting of a linear time-invariant (LTI) system in feedback with a nonlinear operator. Classical approaches to analyzing such systems rely on tools such as dissipativity theory and integral quadratic constraints (IQCs) \cite{megretski97,seiler15,veenman2013stability}. This leads to convex conditions for stability and performance in terms of linear matrix inequalities (LMIs) \cite{Willems:71, Willems:1968, Carrasco:2016, carrasco19, safonov2000zames, kulkarni2002all, zhang2021lyapunov, scherer2022dissipativity, veenman16, fetzer17, megretski97, seiler15}. These methods require explicit knowledge of the system model and characterize nonlinearities through quadratic constraints (QCs) on their input-output behavior.

In recent years, there has been a growing interest in data-driven control, where system-theoretic properties and control laws are inferred directly from data without explicitly identifying a model. This paradigm is motivated by the increasing complexity of modern systems, where accurate modeling is often difficult or impractical. A fundamental result in this area is Willems’ fundamental lemma \cite{WILLEMS2005325}, which shows that the behavior of an unknown LTI system can be fully characterized from a single persistently exciting (PE) trajectory. Building on this result, a large body of work has emerged on direct data-driven analysis and control, including data-enabled predictive control and behavioral approaches to optimal control and stabilization \cite{markovsky2023data, coulson2019data, coulson2019regularized, de2019formulas, van2020data, van2022tutorial, farjadnasab2022model, noori2025data,vanwaarde26}.

An important line of research concerns data-driven dissipativity analysis. Dissipativity provides a unifying framework for stability and performance analysis through energy-like inequalities. Recent works have sought to extend these concepts to the data-driven setting. The notion of data informativity was introduced
in~\cite{van2020data}. This provides necessary and sufficient conditions under which system-theoretic properties can be inferred from data. Building on this framework, data-driven conditions for dissipativity and control have been developed using
LMI formulations. For instance, \cite{van2022dissipativity} derives conditions under which dissipativity can be certified directly from measured data, including in the presence of noise. Other recent efforts have explored finite-horizon dissipativity and performance analysis using input-output data and behavioral representations~\cite{wieler2021data, van2023behavioral, koch2021provably}. 


It is natural to extend these data-driven results to analyze Lurye-type interconnections. This enables analysis of systems with nonlinearities.  The standard (model-based) approach requires explicit models for the nominal LTI system and known IQCs for the nonlinearity. Data-driven methods have been used to learn IQCs for nonlinearities (while still requiring a model of the nominal LTI system) \cite{burgin2025robust,burgin2026robust,gupta2023data,gupta2026non}. In contrast, there is limited work for the case where the IQC is known but data is used for the nominal model. The most relevant work in this setting is  \cite{luppi2022data}.  This paper assumes the nonlinearity satisfies a known (and fixed) QC and input/state data from the nominal plant is available.  A convex, LMI condition is given to design a stabilizing state-feedback controller.

Our paper contributes to this literature by developing data-driven conditions for stability and induced-$\ell_2$ performance for discrete-time Lurye systems.  We assume the nonlinearity satisfies a known set of QCs, but we do not assume that a model for the nominal LTI system is known. Instead, we formulate two data-driven conditions to assess stability and performance.  
These conditions are expressed as convex semidefinite programs. The first condition  uses input/output/state data (Section~\ref{sec:dataDriven}) while the second condition only uses input/output data (Section~\ref{sec:IOdataDriven}). The latter leverages deterministic subspace identification techniques \cite{overschee1996subspace, van1997numerical, VERHAEGEN01111992, van1994n4sid} to reconstruct a state sequence from measured input-output data. The reconstructed state is equivalent to the true state up to an invertible change of coordinates. 
A simple example is given 
in Section~\ref{sec:example} to illustrate both data-driven stability and performance conditions.

Our analysis conditions provide three novel aspects compared to the complementary synthesis conditions in  \cite{luppi2022data}.  First, our formulation certifies both internal stability and induced-$\ell_2$ performance.  Second, we develop an input-output formulation that does not
require state measurements (Section~\ref{sec:IOdataDriven}). Lastly, the QC matrix is not  fixed: we optimize over a convex set of valid QCs jointly with the storage matrix. This optimization is a convex SDP yields the best  (smallest) bound on the induced-$\ell_2$ gain. Extending this analysis condition to the
synthesis formulation in~\cite{luppi2022data} yields, in general, a nonconvex formulation due to couplings in the storage matrix, controller and QC decision variables.

\section{Notation}
\label{sec:notation}

Let $\R^n$ and $\R^{n\times m}$ denote the sets of real $n\times 1$
vectors and $n\times m$ matrices, respectively.  $\Sym^n$ is the set of real symmetric, $n\times n$ matrices. A matrix $M\in \R^{n\times n}$ is \emph{doubly hyperdominant} if the off-diagonal elements are non-positive, and all row and column sums are non-negative. $M^\dagger$ denotes the Moore--Penrose pseudoinverse of $M$. 


Next, let $\N$ denote the set of non-negative integers.  Let $v:\N \to \R^n$ and $w:\N \to \R^n$ be real, vector-valued sequences.  Define the inner product $\langle v,w \rangle : = \sum_{k=0}^\infty v(k)^\top w(k)$.
A sequence $v$ is said to be in $\ell_2^n$ if $\langle v,v\rangle <\infty$. In addition, the $2$-norm for $v \in \ell_2^n$ is defined as
$\|v\|_2:=\sqrt{ \langle v,v\rangle}$.
We will use $\ell_{2e}$ to denote the extended space of sequences whose finite-time truncations belong to $\ell_{2}$.

\section{Problem statement}

Consider the interconnection shown in Figure~\ref{fig:LFTdiagram}, denoted as $F_U(G,\Delta_\Phi)$. This has a memoryless nonlinearity $\Delta_\Phi$  in feedback around the top channels
of a nominal system $G$. 

\begin{figure}[h!t]
\centering
\scalebox{0.92}{
\begin{picture}(100,75)(40,25)
 \thicklines
 \put(75,25){\framebox(40,40){$G$}}
 \put(139,40){$d$}
 \put(150,35){\vector(-1,0){35}}  
 \put(46,40){$e$}
 \put(75,35){\vector(-1,0){35}}  
  \put(78,73){\framebox(34,34){$\Delta_\Phi$}}
 \put(46,70){$v$}
 \put(55,55){\line(1,0){20}}  
 \put(55,55){\line(0,1){35}}  
 \put(55,90){\vector(1,0){23}}  
 \put(139,70){$w$}
 \put(135,90){\line(-1,0){23}}  
 \put(135,55){\line(0,1){35}}  
 \put(135,55){\vector(-1,0){20}}  
\end{picture}
}
\caption{Interconnection $F_U(G,\Delta_\Phi)$ of a nominal discrete-time LTI system $G$ and a memoryless nonlinearity $\Delta_\Phi$.}
\label{fig:LFTdiagram}
\end{figure}

The nominal part $G$ is a discrete-time, linear time-invariant (LTI)
system described by the following state-space model:
\begin{align}
  \label{eq:LTInom}
  \begin{split}
    x(k+1) & = A\, x(k) + B_1\,w(k) +  B_2\, d(k) \\
    v(k) & =C_{1}\,x(k)+D_{11}\, w(k)+ D_{12} \,d(k)\\
    e(k) & =C_{2}\,x(k)+D_{21}\, w(k)+ D_{22}\,d(k),
  \end{split}
\end{align}
where $x(k) \in \R^{n_x}$ is the state at time $k$. Similarly, the inputs at time $k$ are $w(k) \in \R^{m}$ and $d(k)\in \R^{n_d}$, while the outputs at time $k$ are $v(k) \in \R^{m}$ and $e(k)\in \R^{n_e}$.  The  nonlinearity $\Delta_\Phi:\ell_{2e}^m\to \ell_{2e}^m$ is defined by applying a memoryless function $\Phi:\R^m \to \R^m$ at each point in time. In other words, $\Delta_\Phi$ maps $v\in \ell_{2e}^m$ to $w=\Delta_\Phi(v)\in\ell_{2e}^m$ by $w(k)=\Phi( v(k) )$ at each time $k$.
The interconnection $F_U(G,\Delta_\Phi)$ is known as a linear fractional transformation (LFT)  \cite{zhou96}. 

This feedback interconnection involves an implicit equation if $D_{11}\ne 0$.  Specifically, the second equation in \eqref{eq:LTInom} combined with $w(k)=\Phi( v(k) )$ yields:
\begin{align}
   \label{eq:WellPosed}
   v(k)=C_{1}\,x(k)+D_{11}\, \Phi( v(k) )+ D_{12} \,d(k).
\end{align}
This algebraic equation is \emph{well-posed} if there exists a unique solution $v(k)$ for all  values of $x(k)$ and $d(k)$.  Well-posedness of this equation implies that the dynamic system $F_U(G,\Delta_\Phi)$ is well-posed in the following sense:  
\vspace{0.1in}
\begin{defin}
\label{def:wellposed}
The interconnection $F_U(G,\Delta_\Phi)$ is \uline{well-posed} if for all
initial conditions $x(0)\in\R^{n_x}$ and inputs $d\in \ell_{2e}^{n_d}$ 
there exist unique solutions $x\in \ell_{2e}^{n_x}$, $e\in \ell_{2e}^{n_e}$ and $w, v \in \ell_{2e}^{m}$ to the system $F_U(G,\Delta_\Phi)$. 
\end{defin}
\vspace{0.1in}

There are simple sufficient conditions for well-posedness, e.g.,
$D_{11}=0$, which removes the algebraic self-reference of $v(k)$ in
\eqref{eq:WellPosed}. Less conservative conditions are 
given in \cite{valmorbida18} and \cite{zaccarian02}. For simplicity we assume well-posedness in our main results.


Our goal in this paper is to certify stability and performance of the interconnection $F_U(G,\Delta_\Phi)$ using only measured data from the nominal system $G$ and known input-output bounds on the nonlinearity $\Delta_\Phi$. Specifically, we are concerned with internal stability and finite induced-$\ell_2$ gain of the interconnection as defined next.

\vspace{0.1in}
\begin{defin}
A well-posed interconnection $F_U(G,\Delta_\Phi)$ is \uline{internally stable} if $x(k)\to 0$ from any initial condition $x(0)$ with $d(k)=0$ for $k\in \N$. In other words, $F_U(G,\Delta_\Phi)$ is internally stable if $x=0$ is a globally asymptotically stable equilibrium point with no external input.
\end{defin}
\vspace{0.1in}
\begin{defin}
 A well-posed interconnection $F_U(G,\Delta_\Phi)$ has \uline{finite induced-$\ell_2$ gain} if there exists  $\gamma<\infty$ such that the output $e$ generated by any $d\in \ell_{2e}^{n_d}$ with $x(0)=0$ satisfies $\|e\|_2 \le \gamma \, \|d\|_2$.  The infimum of all such bounds on the input-output gain is denoted by $\|F_U(G,\Delta_\Phi)\|_{2\to2}$.
\end{defin}


\section{Preliminary Results}

This section reviews an existing sufficient condition for internal stability and finite induced $\ell_2$-gain for the interconnection $F_U(G,\Delta_\Phi)$. This existing condition requires a model for $G$, i.e. it assumes the state matrices in \eqref{eq:LTInom} are known.  It also requires known bounds on the input-output behavior of the nonlinearity $\Delta_\Phi$.  These bounds are given in terms of quadratic constraints as defined next.

\vspace{0.1in}
\begin{defin}
A function $\Phi:\R^{m} \to \R^{m}$ satisfies the \uline{Quadratic Constraint (QC)} defined by $M\in \Sym^{2m}$ if the
following inequality holds for all $v\in \R^{m}$:
\begin{align*}
\bmtx v \\ \Phi(v)\emtx^\top 
M
\bmtx v \\ \Phi(v)\emtx \ge 0 .
\end{align*}
\end{defin}
\vspace{0.1in}

Next, we state the sufficient condition for stability and performance. This condition is stated using the matrix function $L_{MB}: \Sym^{n_x} \times \Sym^{2m} \times \R_{>0} \times \R_{>0} \to \Sym^{n_x+m+n_d}$ defined as follows:
\begin{align}
\nonumber
& L_{MB}(P, M, \gamma^2, \epsilon) := \epsilon I- \bmtx P & 0 & 0 \\ 0 & 0 & 0 \\ 0 & 0 & \gamma^2I \emtx
 + \bmtx A^\top \\ B_{1}^\top  \\ B_{2}^\top \emtx
P
  \bmtx A^\top \\ B_{1}^\top  \\ B_{2}^\top \emtx^\top \\
\label{eq:LMB}
& \hspace{0.4in}
+ \bmtx C_2^\top \\ D_{21}^\top \\ D_{22}^\top\emtx
  \bmtx C_2^\top \\ D_{21}^\top \\ D_{22}^\top \emtx^\top
+  \bmtx C_1^\top & 0\\ D_{11}^\top & I\\ D_{12}^\top & 0\emtx
M
  \bmtx C_1^\top & 0\\ D_{11}^\top & I\\ D_{12}^\top & 0\emtx^\top
\end{align}
This matrix function depends on the state matrices of $G$. The subscript $MB$ indicates that this will be part of the model-based condition. Specifically, the lemma below uses $L_{MB}$ to formulate a sufficient condition for internal stability and finite induced-$\ell_2$ gain of the interconnection $F_U(G,\Delta_\Phi)$. The proof is based on standard arguments using Lyapunov/dissipativity theory combined with the QC for the nonlinearity \cite{megretski97, seiler15, veenman16}. The proof is given as it provides the starting point for our main results in the next section.

\vspace{0.1in}
\begin{lemma}
    \label{lem:stabMB}
    Consider the interconnection $F_U(G,\Delta_\Phi)$ where
    $G$ is the LTI system \eqref{eq:LTInom} and $\Phi:\R^m \to \R^m$ is a memoryless nonlinearity that satisfies the QC defined by $M$.  Assume the interconnection is well-posed as in Definition~\ref{def:wellposed}.
    Then the interconnection $F_U(G,\Delta_\Phi)$ is internally stable and has $\|F_U(G,\Delta_\Phi)\|_{2\to 2} < \gamma$ if there exists $P\succ 0$, $\gamma > 0$, and $\epsilon>0$ such that $L_{MB}(P,M,\gamma^2,\epsilon) \preceq 0$.
\end{lemma}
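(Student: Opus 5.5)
Define the storage function $V(x):=x^\top P x$. Along any trajectory of $F_U(G,\Delta_\Phi)$, set $z(k):=\bsmtx x(k)\\ w(k)\\ d(k)\esmtx$. The form of $L_{MB}$ looks transposed relative to the usual dissipation inequality: the outer factors multiply $\bsmtx A^\top\\B_1^\top\\B_2^\top\esmtx$ rather than $[A\;B_1\;B_2]$. Read literally, $\bsmtx A^\top\\ B_1^\top\\ B_2^\top\esmtx P \bsmtx A^\top\\ B_1^\top\\ B_2^\top\esmtx^\top$ is not even conformable, since the left factor has $n_x$ columns while $P\in\Sym^{n_x}$ is being multiplied by the transpose of a stacked matrix. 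I will therefore interpret the blocks as the column-stacked matrices $\bsmtx A^\top\\ B_1^\top\\ B_2^\top\esmtx = [A\;B_1\;B_2]^\top$. With this reading, the quadratic form $z^\top L_{MB} z$ equals
\begin{align*}
\epsilon |z|^2 - V(x(k)) - \gamma^2|d(k)|^2 + V(x(k+1)) + |e(k)|^2 + \bmtx v(k)\\ w(k)\emtx^\top M \bmtx v(k)\\ w(k)\emtx,
\end{align*}
where I substitute the state-space equations \eqref{eq:LTInom}, so that $[C_1\;D_{11}\;D_{12}]z=v$ and $[0\;I\;0]z=w$. This identity is the main computation, and verifying the conformal partitioning is the only real bookkeeping. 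Since $w(k)=\Phi(v(k))$, the QC gives a nonnegative last term. Hence $L_{MB}\preceq 0$ yields the pointwise dissipation inequality
\begin{align*}
V(x(k+1))-V(x(k)) + |e(k)|^2 - \gamma^2 |d(k)|^2 \le -\epsilon\,|z(k)|^2 .
\end{align*}

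\emph{Internal stability.} Set $d\equiv 0$, which is allowed since well-posedness guarantees a solution exists. The inequality gives $V(x(k+1))-V(x(k))\le -\epsilon |x(k)|^2$. Because $P\succ 0$, $V$ is a positive definite, radially unbounded quadratic with $V(x)\le \lambda_{\max}(P)|x|^2$. This yields $V(x(k+1))\le (1-\epsilon/\lambda_{\max}(P))V(x(k))$, so $V$ decays geometrically. The contraction factor lies in $[0,1)$ automatically, since $V(x(k+1))\ge 0$. Therefore $x(k)\to 0$ from every initial condition, i.e.\ global (exponential) stability.

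\emph{Gain bound.} Take $x(0)=0$ and $d\in\ell_2$; for $d\in\ell_{2e}$ the argument is done on finite horizons. Summing the inequality from $k=0$ to $N-1$ and using $V(x(N))\ge 0$, $V(x(0))=0$ gives
\begin{align*}
\sum_{k=0}^{N-1}|e(k)|^2 \le \gamma^2\sum_{k=0}^{N-1}|d(k)|^2 - \epsilon\sum_{k=0}^{N-1}|d(k)|^2 .
\end{align*}
Letting $N\to\infty$ yields $\|e\|_2^2\le(\gamma^2-\epsilon)\|d\|_2^2$. Hence $\|F_U(G,\Delta_\Phi)\|_{2\to2}\le\sqrt{\gamma^2-\epsilon}<\gamma$. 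The strict inequality comes exactly from the $\epsilon I$ term in $L_{MB}$ acting on the $d$-block. For $\gamma^2-\epsilon$ to make sense I note $\gamma^2-\epsilon\ge 0$ necessarily; otherwise choosing $d\ne 0$ gives a contradiction, or the bound holds trivially with $e=0$.

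The only delicate point is the first step: confirming the quadratic-form identity with the interpretation of the stacked blocks, including the $M$ term's outer factor mapping $z$ to $(v,w)$. Everything after that is the standard Lyapunov and dissipation summation.
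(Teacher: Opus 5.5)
Your proof is correct and is essentially the paper's argument. Both use the storage function $V(x)=x^\top Px$, the congruence of $L_{MB}$ with $[x(k)^\top\; w(k)^\top\; d(k)^\top]$, dropping of the nonnegative QC term, and summation with $V(x(0))=0$, $V(x(N))\ge 0$ to get $\|e\|_2^2\le(\gamma^2-\epsilon)\|d\|_2^2$. The only variation is the stability step: you use $V(x)\le\lambda_{\max}(P)|x|^2$ to get geometric decay of $V$ (exponential stability), while the paper sums to get $\epsilon\|x\|_2^2\le V(x(0))$ and concludes $x(k)\to 0$ from $x\in\ell_2$.

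Your conformability worry is unfounded. $\bsmtx A^\top\\ B_1^\top\\ B_2^\top\esmtx$ is $(n_x+m+n_d)\times n_x$, so the expression as written already equals $[A\;B_1\;B_2]^\top P\,[A\;B_1\;B_2]$, and your ``interpretation'' is just the literal reading.
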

\begin{proof} 
Define the storage function $V\left(x\right) := x^\top P x$. 
By well-posedness, the LFT $F_U(G,\Delta_\Phi)$ has a unique solution $x\in \ell_{2e}^{n_x}$, $e\in \ell_{2e}^{n_e}$ and $w, v \in \ell_{2e}^{m}$ for all
initial conditions $x(0)\in\R^{n_x}$ and inputs $d\in \ell_{2e}^{n_d}$. 
Left and right multiply $L_{MB}(P,M,\gamma^2,\epsilon)$, defined in~\eqref{eq:LMB}, by $[x(k)^\top \; w(k)^\top \; d(k)^\top]$ and its transpose.  The result gives the following dissipation inequality:
\begin{align*}
    \begin{split}  
    & V(x(k+1)) - V(x(k)) + \epsilon x(k)^\top x(k) + e(k)^\top e(k) \\
    & + \epsilon w(k)^\top w(k) + \bmtx v(k)\\w(k)\emtx^\top M \bmtx v(k)\\w(k)\emtx \leq (\gamma^2-\epsilon)d(k)^\top d(k) 
    \end{split}
\end{align*}
The term involving $M$ is non-negative for every $k$ as the nonlinearity satisfies the QC defined by $M$. Moreover, the term $\epsilon w(k)^\top w(k)$ is also non-negative at every $k$.  Hence the inequality still holds after these terms are dropped. Summing the remaining inequality from $k = 0$ to an arbitrary time $k = T_0 \ge 0$ yields
\begin{align}
\label{eq:sumDissipation}
    \begin{split}
        & \epsilon \sum_{k=0}^{T_0} x(k)^\top x(k) + \sum_{k=0}^{T_0} e(k)^\top e(k) \\
        & \leq V(x(0)) - V(x(T_0+1))
        +(\gamma^2-\epsilon) \sum_{k=0}^{T_0} d(k)^\top d(k).
    \end{split}
\end{align}

First, consider $d(k)=0$ for all $k$. Inequality \eqref{eq:sumDissipation} 
combined with $V(x(T_0+1))\ge 0$ (because $P\succ 0$) gives:
\begin{align*}
    \epsilon \sum_{k=0}^{T_0} x(k)^\top x(k) \le V(x(0))
\end{align*}
This bound holds for any $T_0 \ge 0$. Thus the left side is uniformly bounded as $T_0\to  \infty$ since $V(x(0))$ is finite. Hence $x\in \ell_2^{n_x}$ and $\|x\|_2^2 \le \frac{1}{\epsilon} V(x(0))$. This implies $x(k) \to 0$ as $k\to \infty$. We conclude that the interconnection $F_U(G,\Delta_\Phi)$ is internally stable.

Now assume $x(0)=0$ and $d \in \ell_{2}^{n_d}$. Inequality \eqref{eq:sumDissipation}
combined with $V(x(T_0+1))\ge 0$ and $V(x(0))=0$
yields:
\begin{align*}
    \sum_{k=0}^{T_0} e(k)^\top e(k) \leq (\gamma^2-\epsilon) \sum_{k=0}^{T_0} d(k)^\top d(k)
\end{align*}
Taking the limit as $T_0\to \infty$ on the right side yields:
\begin{align*}
  \sum_{k=0}^{T_0} e(k)^\top e(k) \le (\gamma^2-\epsilon) \| d\|_2^2    
\end{align*}
This implies the left side is uniformly bounded as $T_0\to \infty$ and hence $e\in \ell_2^{n_e}$. Moreover,  $\| e\|_2^2 \le (\gamma^2-\epsilon) \| d\|_2^2$. Therefore, the interconnection has finite induced-$\ell_2$ gain with $\|F_U(G,\Delta_\Phi)\|_{2\to 2} < \gamma$.
\end{proof}
\vspace{0.1in}

This theorem provides a sufficient condition to bound the $\ell_2$ gain of the system. Suppose the nonlinearity satisfies any QC defined by a matrix $M$ in the set $\mathcal{M}\subset \Sym^{2m}$. We can solve the following optimization to find the best (smallest) bound on the gain for a fixed $\epsilon>0$:
\begin{align}
\label{eq:IncSDP}
    \min_{M\in \mathcal{M}, P\succ 0, \gamma^2>0} \gamma^2
    \mbox{ s.t. } L_\text{MB}(P,M,\gamma^2,\epsilon) \preceq 0.
\end{align}
The constraint $ L_\text{MB}(P,M,\gamma^2,\epsilon) \preceq 0$ is a linear matrix inequality (LMI) in $P$, $M$ and $\gamma^2$. Moreover, the set $\mathcal{M}$ if often described by affine constraints on $M$. In this case, the optimization \eqref{eq:IncSDP} is a (convex) semidefinite program (SDP).  The data driven conditions in the next section can also be formulated as SDPs.

\section{Main Results}

\subsection{Data-Driven Stability and Performance}
\label{sec:dataDriven}

In this subsection, we derive a data-driven stability and performance condition.  This condition assumes that measurements are available for the state $x$, inputs $(w,d)$ and outputs $(v,e)$.  We will extend our condition in the next subsection to only require input-output data, and not state data.

First, define the following stacked data matrices for any signals $(x,w,d,v,e)$ with data length $N\ge 1$:
\begin{align}
\begin{split}
    X_{N}&= \bmtx x(0) & x(1) & \cdots & x(N-1) \emtx, \\
    X_{N}^{+}&= \bmtx x(1) & x(2) & \cdots & x(N) \emtx, \\
    W_{N}&= \bmtx w(0) & w(1) & \cdots & w(N-1) \emtx, \\
    D_{N}&= \bmtx d(0) & d(1) & \cdots & d(N-1) \emtx, \\
    V_{N}&= \bmtx v(0) & v(1) & \cdots & v(N-1) \emtx, \\
    E_{N}&= \bmtx e(0) & e(1) & \cdots & e(N-1) \emtx.
    \label{eq:data}
\end{split}
\end{align}
Note that each stacked data matrix has $N$ columns, and the number of rows equals the dimension of the corresponding signal. We require the input data to be sufficiently rich, as formalized next.

The LFT system $G$, given in \eqref{eq:LTInom}, has the following stacked input-output signals:
\begin{align*}
  u:= \bmtx w \\ d \emtx
  \mbox{ and }   
  y:=\bmtx v \\ e \emtx.
\end{align*}
The dimensions of these stacked input/output signals are $n_u:=m+n_d$ and $n_y:=m+n_e$, respectively. The lumped realization of $G$, from $u$ to $y$, is denoted as $(A,B,C,D)$ where $B:=[B_1, \, B_2]$ and $(C,D)$ are defined similarly. Let input data  $\{u(k)\}_{k=0}^{N-1}$
be given along with integers $i,j \ge 1$ satisfying $N=i+j-1$. Following Section~2.1.2 of~\cite{overschee1996subspace}, 
define a  block-Hankel matrix from the input data
\begin{align}
    U_{i,j}^p := \bmtx u(0) & u(1) & \cdots & u(j-1)\\ u(1) & u(2) & \cdots & u(j)\\ \vdots & \vdots & & \vdots\\ u(i-1) & u(i) & \cdots & u(i+j-2)
    \emtx 
    \in \R^{(in_u) \times j}.
    \label{eq:pInHank}
\end{align}
The subscripts denote that the matrix has $i$ block rows (each of dimension $n_u$) and $j$ columns. The lower right entry corresponds to $u(N-1)$. Hence this block-Hankel matrix  requires input data of length $N$. It is used to define the notion of persistency of excitation from~\cite{WILLEMS2005325}.

\vspace{0.1in}
\begin{defin}
\label{def:PE}
The finite input sequence $\{u(k)\}_{k=0}^{N-1}$ is
\uline{persistently exciting (PE) of order $i$} if
$U_{i,j}^p$ has full row rank, i.e., $\operatorname{rank}\big(U_{i,j}^p\big)=in_u.$
\end{defin}
\vspace{0.1in}

Next, define the following matrix function $L_{SSD}: \Sym^{n_x} \times \Sym^{2m} \times \R_{>0} \times \R_{>0} \to \Sym^{N}$
using the data matrices in \eqref{eq:data}:
\begin{align}
\nonumber
& L_{SSD}(P,M,\gamma^2,\epsilon):= \,\epsilon \bmtx X_{N} \\ W_{N} \\ D_{N} \emtx^\top \bmtx X_{N} \\ W_{N} \\ D_{N} \emtx- X_{N}^\top\, P\, X_{N} + {X_{N}^{+}}^\top\, P\, X_{N}^{+}\\
& 
\hspace{0.8in}
- \gamma^2\, D_{N}^\top\, D_{N} + E_{N}^\top\, E_{N}
+  \bmtx V_{N} \\ W_{N}\emtx^\top
M
  \bmtx V_{N} \\ W_{N}\emtx
\label{eq:LSSD}
\end{align}
The subscript $SSD$ indicates that this will be part of the state-space data-driven formulation, which uses measured state trajectories in addition to input-output data.
The following theorem is obtained by performing a congruence transformation on the model-based function $L_{MB}$ used in Lemma~\ref{lem:stabMB}.

\vspace{0.1in}
\begin{theorem}
\label{thm:stabMF}
Consider the interconnection $F_U(G,\Delta_\Phi)$, where $G$ is the LTI system in \eqref{eq:LTInom}, $(A,B)$ is controllable, and $\Phi:\R^m \to \R^m$ is a memoryless nonlinearity that satisfies the QC defined by $M$. Assume the interconnection is well-posed as in Definition~\ref{def:wellposed}.

Assume the stacked input data  $\{u(k)\}_{k=0}^{N-1}$ is PE of order $n_x+1$. Let $X_{N}, X_N^+, W_{N}, D_{N}, V_{N},$ and $E_{N}$ be data matrices, defined in \eqref{eq:data}, constructed from trajectories of the nominal LTI system $G$ excited by this stacked input. 

Then any $(P,\gamma,\epsilon)$ satisfy $L_{MB}(P,M,\gamma^2,\epsilon) \preceq 0$ if and only if they satisfy $L_{SSD}(P,M,\gamma^2,\epsilon) \preceq 0$.
Moreover, the interconnection $F_U(G,\Delta_\Phi)$ is internally stable and has $\|F_U(G,\Delta_\Phi)\|_{2\to 2} < \gamma$ if there exists $P\succ 0$ and  $\gamma > 0$, and $\epsilon>0$ such that either of these (equivalent) LMI conditions is satisfied.
\end{theorem}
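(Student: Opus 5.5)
The plan is to show $L_{SSD}(P,M,\gamma^2,\epsilon) = Z^\top L_{MB}(P,M,\gamma^2,\epsilon)\, Z$ with
\[
Z := \bmtx X_N \\ W_N \\ D_N \emtx \in \R^{(n_x+m+n_d)\times N},
\]
and then argue that $Z$ has full row rank. Once this is established, the congruence gives the equivalence. If $L_{MB}\preceq 0$, then $Z^\top L_{MB} Z \preceq 0$ trivially. Conversely, if $Z$ has full row rank, every $\xi\in\R^{n_x+m+n_d}$ can be written as $\xi = Z\eta$, so $\xi^\top L_{MB}\xi = \eta^\top L_{SSD}\eta \le 0$. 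The stability and performance claim then follows directly from Lemma~\ref{lem:stabMB}, applied to whichever (equivalent) LMI holds.

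For the identity, I substitute the data relations column by column. Since the data are trajectories of \eqref{eq:LTInom}, we have
\[
X_N^+ = \bmtx A & B_1 & B_2\emtx Z, \qquad E_N = \bmtx C_2 & D_{21} & D_{22}\emtx Z, \qquad \bmtx V_N\\ W_N\emtx = \bmtx C_1 & D_{11} & D_{12}\\ 0 & I & 0\emtx Z .
\]
Expanding $Z^\top L_{MB} Z$ term by term, the $\epsilon I$ term gives $\epsilon Z^\top Z$. The block-diagonal term gives $-X_N^\top P X_N - \gamma^2 D_N^\top D_N$. The remaining three quadratic terms give ${X_N^+}^\top P X_N^+$, $E_N^\top E_N$, and the $M$-term, respectively. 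This matches \eqref{eq:LSSD} exactly. The only care needed is to read the outer factors in \eqref{eq:LMB} as $[A\ B_1\ B_2]^\top P [A\ B_1\ B_2]$ and so on; the transposes written there amount to this.

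The main step, and the only nontrivial one, is the rank condition $\operatorname{rank} Z = n_x+n_u$. Since $W_N$ and $D_N$ are just a row permutation of $[u(0)\ \cdots\ u(N-1)]$, it suffices to show that $\bmtx X_N\\ U_N\emtx$ has full row rank. This is Corollary~2 / Theorem~1 of Willems' fundamental lemma \cite{WILLEMS2005325} (in the form of \cite{de2019formulas}): controllability of $(A,B)$ together with $u$ being PE of order $n_x+1$ implies $\operatorname{rank}\bmtx X_N\\ U_{1,N}\emtx = n_x+n_u$.

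I intend to invoke this result directly rather than reprove it. If a self-contained argument is preferred, I would sketch the standard one. Suppose a row vector $[\xi^\top\ \eta^\top]$ annihilates the data matrix. Shift the relation through the dynamics to build left annihilators of $\bmtx X\\ U_{n_x+1}\emtx$. Then use the Cayley–Hamilton theorem and controllability to conclude that $\xi^\top A^k B=0$ for all $k$, hence $\xi=0$, and then PE forces $\eta=0$. One minor bookkeeping point is that the data length required for PE of order $n_x+1$ uses the Hankel matrix $U^p_{n_x+1,j}$ with $N=n_x+j$; I would check that the column count of $Z$ is consistent with this indexing, truncating to the first $j$ columns if necessary, which does not affect the full-row-rank conclusion.
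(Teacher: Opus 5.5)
Your proposal is correct and follows the paper's proof. The paper uses the same congruence $L_{SSD} = R^\top L_{MB} R$ with $R := \bmtx X_N^\top & W_N^\top & D_N^\top \emtx^\top$ (your $Z$), the same full-row-rank conclusion from Corollary~2 of the fundamental lemma under PE of order $n_x+1$, and then Lemma~\ref{lem:stabMB}. Your converse via $\xi = Z\eta$ is just the pointwise form of the paper's $RR^\dagger = I$ step, and your column-count remark is harmless because full row rank of a column submatrix already implies full row rank of $R$.
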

\begin{proof}
Define $R := \bmtx X_{N}^\top & W_{N}^\top & D_{N}^\top \emtx^\top$.
The definitions of $L_{MB}$ in~\eqref{eq:LMB} and $L_{SSD}$ in~\eqref{eq:LSSD} combined with the dynamics of $G$ in \eqref{eq:LTInom} give:
\begin{align}
    \label{eq:LSSDandLMB}
    L_{SSD}(P,M,\gamma^2,\epsilon) = R^\top\, L_{MB}(P,M,\gamma^2,\epsilon)\, R.
\end{align}
It follows immediately that $L_{MB} \preceq 0$ implies $L_{SSD} \preceq 0$ (Proposition 8.1.2 (xii) of~\cite{bernstein09}).

Moreover, it follows from Corollary~2 (ii) of~\cite{WILLEMS2005325}, that if $u$ is PE of order $n_x+1$ then $R$ has full row rank. Hence the pseudoinverse of $R$ satisfies $R R^\dagger = I$ (by Prop 6.1.6 (xix) of \cite{bernstein09}).
Multiplying \eqref{eq:LSSDandLMB} on the right and left by $R^\dagger$ and its transpose yields:
\begin{align*}
    L_{MB}(P,M,\gamma^2,\epsilon) = (R^\dagger)^\top\, L_{SSD}(P,M,\gamma^2,\epsilon)\, R^\dagger.
\end{align*}
Again, it follows that $L_{SSD}\preceq 0$ implies $L_{MB}\preceq 0$. Thus the two LMI conditions are equivalent.

Finally, if $\exists \,P\succ 0$, $\gamma > 0$, and $\epsilon>0$ such that
$L_{MB}(P,M,\gamma^2,\epsilon) \preceq 0$ then, by
Lemma~\ref{lem:stabMB}, the interconnection is internally stable with induced gain less than $\gamma$. Feasibility of the equivalent condition $L_{SSD}(P,M,\gamma^2,\epsilon) \preceq 0$ also implies internal stability and the gain bound.
\end{proof}
\vspace{0.1in}

Note that the input data richness (PE condition) is needed to show that
$L_{SSD}\preceq 0$ implies $L_{MB}\preceq 0$, but not the reverse direction.
The gain bound  from the data-driven condition in Theorem~\ref{thm:stabMF} guarantees dissipativity from the observed data of length $N$. 
It matches the model-based SDP bound, in general, only when the PE condition in Definition~\ref{def:PE} is satisfied.  This PE condition implies that Hankel matrix $U_{(n_x+1),j}^p$ has at least $(n_x+1)n_u$ columns.  The total data length is $N=i+j-1$ with $i=(n_x+1)$ and $j\ge (n_x+1)n_u$. Thus the total data length must be at least $N \ge n_x n_u + n_x + n_u$. This ensures the data captures the full system behavior. 

\subsection{Input-Output Data-Driven Stability and Performance}
\label{sec:IOdataDriven}

In this subsection, we derive a data-driven stability and performance condition that does not require direct access to the state trajectory. Instead, a state sequence is reconstructed from measured input-output data using a deterministic subspace identification procedure. We follow the notation and formulations in \cite{overschee1996subspace}. The reconstructed state is equivalent to the true state up to an invertible change of coordinates. This allows the model-based dissipation inequality to be reformulated entirely in terms of measured input-output data. The nonlinear relation between $v$ and $w$ is handled separately through a static quadratic constraint.

Let the integers $i,j\ge 1$ be given. Define the extended observability matrix $\Gamma_i$ and lower block-triangular Toeplitz matrix $H_i$ as follows:
\begin{align*}
\Gamma_{i}&:=
\bmtx
C^\top &
(CA)^\top &
\cdots &
(CA^{i-1})^\top
\emtx^\top \in \R^{i n_y \times n_x} \\
H_i & := \bmtx D & 0 & \cdots & 0 \\
CB & D & \cdots & 0\\
\vdots & \vdots &  & \vdots \\
CA^{i-2}B & CA^{i-3}B & \cdots & D\emtx \in \R^{(in_y) \times (in_u)}.
\end{align*}
Also, define the block-Hankel matrix  using $\{u(k) \}_{k=i}^{\hat{N}-1}$ where $\hat{N}:=2i+j-1$:
\begin{align*}
    U_{i,j}^f := \bmtx u(i) & u(i+1) & \cdots & u(i+j-1)\\ u(i+1) & u(i+2) & \cdots & u(i+j)\\ \vdots & \vdots & & \vdots\\ u(2i-1) & u(2i) & \cdots & u(2i+j-2)
    \emtx \in \R^{(in_u) \times j}.
\end{align*}
Note that $U_{i,j}^f$ is defined similarly to $U_{i,j}^p$ in~\eqref{eq:pInHank}. The superscripts $p$ and $f$ denote that each column decomposes the input into past and future data.  For example, the first columns of $U_{i,j}^p$ and $U_{i,j}^f$ depend on $\{ u(k)\}_{k=0}^{i-1}$ and  $\{ u(k)\}_{k=i}^{2i-1}$, respectively.  
The lower right entry of $U_{i,j}^f$ corresponds to $u(\hat{N}-1)$. Hence constructing both $U_{i,j}^p$ and $U_{i,j}^f$ requires input data of length $\hat{N}$. The output block-Hankel matrices $Y_{i,j}^p$ and $Y_{i,j}^f \in \R^{(in_y) \times j}$ are defined similarly using the output data $\{y(k)\}_{k=0}^{N-1}$ and $\{ y(k)\}_{k=i}^{\hat{N}-1}$, respectively.

We state two supporting lemmas before stating the main input/output data-driven result. In the first supporting lemma, $\text{row}(M)$ denotes the row space of the matrix $M$.

\vspace{0.1in}
\begin{lemma}
\label{lem:assumption2rank}
Let stacked input data $\{u(k) \}_{k=0}^{\hat{N}-1}$ be given along with integers
$j\ge 1$ and $i>n_x$ such that $\hat{N}=2i+j-1$.  
Suppose $(A,C)$ of the nominal LTI system $G$ in \eqref{eq:LTInom} is observable. Assume the block-Hankel matrices constructed from the input/output data $(u,y)$ over $k=0,\ldots,\hat{N}-1$ satisfy
\begin{align}
    \operatorname{rank} 
    \left(
    \bmtx {U_{i,j}^p} \\ {Y_{i,j}^p} \\ {U_{i,j}^f} 
    \emtx
    \right)
    = 2in_u+n_x.
    \label{eq:rankHank}
\end{align}
Then $\operatorname{row}(U_{i,j}^f)\cap \operatorname{row}(X_j)=\{0\}$.
\end{lemma}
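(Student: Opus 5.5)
The plan is to express the output Hankel matrix through the state sequence, and then read the rank condition \eqref{eq:rankHank} as a statement that a stacked matrix built from $U_{i,j}^p$, $X_j$ and $U_{i,j}^f$ has full row rank. Here I take $X_j=\bmtx x(0) & \cdots & x(j-1)\emtx$, consistent with the notation in \eqref{eq:data}.

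First, iterating the dynamics \eqref{eq:LTInom} in lumped form gives the standard subspace identity (Section~2.1 of \cite{overschee1996subspace})
\begin{align*}
Y_{i,j}^p = \Gamma_i X_j + H_i U_{i,j}^p .
\end{align*}
To check it, column $k$ of $Y_{i,j}^p$ stacks $y(k),\ldots,y(k+i-1)$. Each of these equals $CA^{l}x(k)$ plus convolution terms in $u(k),\ldots,u(k+l)$, and those terms are exactly column $k$ of $H_iU_{i,j}^p$. Consequently
\begin{align*}
\bmtx U_{i,j}^p \\ Y_{i,j}^p \\ U_{i,j}^f \emtx
= \bmtx I & 0 & 0 \\ H_i & \Gamma_i & 0 \\ 0 & 0 & I \emtx
\bmtx U_{i,j}^p \\ X_j \\ U_{i,j}^f \emtx .
\end{align*}

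Second, the rank of a product is at most the rank of either factor. The right-hand stacked matrix has only $2in_u+n_x$ rows, so its rank is at most $2in_u+n_x$. Assumption \eqref{eq:rankHank} says the left-hand side already has rank $2in_u+n_x$. Hence $\bmtx {U_{i,j}^p}^\top & X_j^\top & {U_{i,j}^f}^\top\emtx^\top$ has full row rank, i.e., all of its rows are linearly independent. Observability of $(A,C)$ with $i>n_x$ gives that $\Gamma_i$ has full column rank. This is not needed for the rank inequality itself, but it explains why the assumed rank $2in_u+n_x$ is the natural maximal value.

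Finally, suppose $z\in\operatorname{row}(U_{i,j}^f)\cap\operatorname{row}(X_j)$. Then $z=a^\top U_{i,j}^f=b^\top X_j$ for some vectors $a,b$. This gives the row relation
\begin{align*}
\bmtx 0 \\ -b \\ a\emtx^\top\bmtx U_{i,j}^p \\ X_j \\ U_{i,j}^f\emtx = 0 .
\end{align*}
Full row rank forces $a=0$ and $b=0$, so $z=0$. If $X_j$ is instead meant to be the future state sequence $\bmtx x(i)&\cdots&x(i+j-1)\emtx$, the same argument goes through. In that case one writes $X_j = A^i X_j^{p} + \Delta_i U_{i,j}^p$, with $X_j^{p}$ the past states and $\Delta_i$ the reversed controllability matrix. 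Its row space lies in $\operatorname{row}(U_{i,j}^p)+\operatorname{row}(X_j^p)$, which intersects $\operatorname{row}(U_{i,j}^f)$ only in $0$ by the same linear independence.

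The only step requiring care is the Hankel identity and its bookkeeping of the block Toeplitz structure of $H_i$; the rest is linear algebra.
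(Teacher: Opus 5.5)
Your proof is correct, and it takes a more direct route than the paper's. Both arguments start from $Y_{i,j}^p=\Gamma_i X_j+H_iU_{i,j}^p$.

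The paper then works only with the two-block matrix $\bmtx U_{i,j}^p \\ Y_{i,j}^p\emtx$. It bounds the rank of this matrix by $in_u+n_x$. It uses the subspace-dimension formula together with \eqref{eq:rankHank} to get $\operatorname{row}\left(\bmtx U_{i,j}^p \\ Y_{i,j}^p\emtx\right)\cap\operatorname{row}(U_{i,j}^f)=\{0\}$. Finally, it invokes observability with $i>n_x$ (so $\Gamma_i^\dagger\Gamma_i=I$) to write $X_j$ as a linear combination of past input/output rows, which places $\operatorname{row}(X_j)$ inside the past-data row space.

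You instead factor the whole three-block matrix through $\bmtx U_{i,j}^p \\ X_j \\ U_{i,j}^f\emtx$. The rank-of-product inequality then shows that this matrix has full row rank. That gives the conclusion immediately, and it is actually stronger: the rows of $U_{i,j}^p$, $X_j$, $U_{i,j}^f$ are jointly independent.

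Your route also shows that the observability hypothesis and $i>n_x$ are superfluous for this lemma. Your left factor has rank $2in_u+\operatorname{rank}\Gamma_i$, so the rank condition \eqref{eq:rankHank} itself forces $\Gamma_i$ to have full column rank. Observability is therefore implied by the rank condition, not merely ``natural.''

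What the paper's route buys is the intermediate fact that past input/output data and future inputs have trivially intersecting row spaces, which can be checked purely from data. It also gives an explicit expression of the state as a linear function of past input/output data. That is the same mechanism behind the oblique projection $\mathcal{O}_i$ in Lemma~\ref{lem:subspace}.
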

\vspace{0.1in}
\begin{proof}
The proof is given in Appendix~\ref{sec:proofLem2}.
\end{proof}
\vspace{0.1in}

Next, we use subspace identification results to reconstruct the state, up to a coordinate transformation, from input/output data. This requires notation for orthogonal and oblique projections following Sections~1.4.1 and~1.4.2
of~\cite{overschee1996subspace}. The orthogonal complement of a subspace $\mathcal{V}$ is denoted by $\mathcal{V}^\perp$. 
Let $A\in\R^{p\times j}$, $B\in\R^{q\times j}$, and
$C\in\R^{r\times j}$ be matrices with the same number of
columns. Define the orthogonal
projection matrices onto $\operatorname{row}(B)$ and
$\operatorname{row}(B)^\perp$ by
$\Pi_B:=B^\top(BB^\top)^\dagger B$,
and  $\Pi_B^\perp:=I_j-\Pi_B$, respectively.
Accordingly, the orthogonal projections of
$\operatorname{row}(A)$ onto $\operatorname{row}(B)$ and
$\operatorname{row}(B)^\perp$ are denoted by $A/B := A\Pi_B, \; A/B^\perp := A\Pi_B^\perp$,
respectively. Finally, the oblique projection of
$\operatorname{row}(A)$ along $\operatorname{row}(B)$ onto
$\operatorname{row}(C)$ is denoted by
$A/_{B}\, C := (A\Pi_B^\perp) \, (C\Pi_B^\perp)^\dagger C$.

Let $\mathcal{O}_i \in \R^{in_y\times j}$ denote the oblique projection of the future outputs $Y_{i,j}^f$ onto the past data $(U_{i,j}^p,Y_{i,j}^p)$ along the future inputs $U_{i,j}^f$.
Following Section~1.4.2 of~\cite{overschee1996subspace}, this oblique projection is given by
\begin{align}
\begin{split}
\mathcal{O}_i :=& Y_{i,j}^f /_{U_{i,j}^f} \bmtx U_{i,j}^p \\ Y_{i,j}^p \emtx = Y_{i,j}^f \Pi_{U_{i,j}^f}^{\perp} \left(\bmtx U_{i,j}^p \\ Y_{i,j}^p \emtx \Pi_{U_{i,j}^f}^{\perp}\right)^\dagger \bmtx U_{i,j}^p \\ Y_{i,j}^p \emtx,
\end{split}
\label{eq:obliqueO}
\end{align}
where
\begin{align}
\Pi_{U_{i,j}^f}^{\perp} :=& I - {U^f_{i,j}}^\top \left(U_{i,j}^f\, {U^f_{i,j}}^\top\right)^{\dagger} U_{i,j}^f.
\label{eq:obliquePi}
\end{align}
This oblique projection is used in the next subspace identification result.

\vspace{0.1in}
\begin{lemma}
\label{lem:subspace}
Assume the nominal LTI system $G$ in~\eqref{eq:LTInom} is minimal, i.e., $(A,B)$ is controllable and $(A,C)$ is observable. Let stacked input data $\{u(k) \}_{k=0}^{\hat{N}-1}$ be given along with integers
$j\ge 1$ and $i>n_x$ such that $\hat{N}=2i+j-1$.  Suppose that the following conditions hold:
\begin{enumerate}
    \item  The stacked input data  is PE of order $2i$ as in Definition~\ref{def:PE}. 
    
   \item  The block-Hankel matrices $(U_{i,j}^p, U_{i,j}^f, Y_{i,j}^p)$ constructed 
     from the input/output data $(w,d,v,e)$ over $k=0,\ldots,\hat{N}-1$ 
     satisfy the rank condition in \eqref{eq:rankHank}.
\end{enumerate}
Then $\mathcal{O}_{i}$ has rank $n_x$.  Moreover,
Let $\mathcal{O}_{i} = U_r \Sigma_r V_r^\top$ be the compact singular value decomposition of the oblique projection~\eqref{eq:obliqueO}, retaining the $n_x$ nonzero singular values. Define the reconstructed state data matrix by:
\begin{align}
\label{eq:projstate}
Z := \Sigma_r^{1/2}\, V_r^\top 
\in \R^{n_x \times j}.
\end{align}
Let the $j$ columns of this matrix be denoted by:
\begin{align*}
    Z := \bmtx z(i) & z(i+1) & \cdots, & z(i+j-1) \emtx.
\end{align*}
There exists an invertible matrix $T\in \R^{n_x \times n_x}$ such that $z(k)=T \, x(k)$
where $\{ x(k) \}_{k=i}^{i+j-1}$ is the state data from the nominal system $G$ in \eqref{eq:LTInom} driven by the stacked input from an initial condition $x(0)$. 

\end{lemma}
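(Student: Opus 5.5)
We follow the standard deterministic N4SID argument of Van Overschee and De Moor. The goal is to show that $\mathcal{O}_i = \Gamma_i X_j^f$, where
\[
X_j^f := \bmtx x(i) & \cdots & x(i+j-1)\emtx
\]
is the future state data matrix. Here $\Gamma_i$ has full column rank $n_x$ and $X_j^f$ has full row rank $n_x$. The rank and factorization claims then follow from linear algebra.

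\textbf{Step 1: data equations.} Iterating \eqref{eq:LTInom} gives three relations:
\[
Y_{i,j}^p=\Gamma_i X_j + H_i U_{i,j}^p,\qquad Y_{i,j}^f=\Gamma_i X_j^f + H_i U_{i,j}^f,\qquad X_j^f = A^i X_j + \Delta_i U_{i,j}^p,
\]
where $X_j$ is the past state matrix and $\Delta_i$ is the reversed extended controllability matrix.

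\textbf{Step 2: the future state lies in the past row space.} Since $i>n_x$ and $(A,C)$ is observable, $\Gamma_i$ has full column rank, so $X_j=\Gamma_i^\dagger(Y_{i,j}^p-H_iU_{i,j}^p)$. Hence $\operatorname{row}(X_j^f)\subseteq\operatorname{row}\bmtx U_{i,j}^p\\ Y_{i,j}^p\emtx$. Writing $W_p:=\bmtx U_{i,j}^p\\ Y_{i,j}^p\emtx$, we have $X_j^f = L W_p$ for some $L$.

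\textbf{Step 3: compute the oblique projection.} The oblique projection $(\cdot)/_{U_{i,j}^f}W_p$ is linear. It annihilates $U_{i,j}^f$, and it acts as the identity on matrices of the form $LW_p$, provided $\operatorname{row}(W_p)\cap\operatorname{row}(U_{i,j}^f)=\{0\}$. Both facts have to be checked from the formula \eqref{eq:obliqueO}. The first is immediate because $U_{i,j}^f\Pi^\perp_{U_{i,j}^f}=0$. For the second, we need
\[
W_p\Pi^\perp\,(W_p\Pi^\perp)^\dagger W_p = W_p,
\]
which requires $\operatorname{rank}(W_p\Pi^\perp)=\operatorname{rank}(W_p)$. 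This is exactly the trivial-intersection condition. We get it from the rank condition \eqref{eq:rankHank}: the rank of the stacked matrix equals $\operatorname{rank}(U_{i,j}^f)+\operatorname{rank}(W_p)$. PE of order $2i$ gives $\operatorname{rank}\bmtx U_{i,j}^p\\U_{i,j}^f\emtx=2in_u$, and Lemma~\ref{lem:assumption2rank} gives the required rank of $W_p$, namely $in_u+n_x$. This transversality step is the main obstacle, and the bookkeeping between the Moore--Penrose formula and the abstract oblique projection is where care is needed. Combining, $\mathcal{O}_i=\Gamma_i L W_p=\Gamma_i X_j^f$.

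\textbf{Step 4: rank and factorization.} The matrix $X_j^f$ has full row rank $n_x$. By Step 1, $X_j^f=A^iX_j+\Delta_iU_{i,j}^p$. With PE and controllability, the standard fundamental-lemma argument (Corollary~2 of \cite{WILLEMS2005325}) shows that $\bmtx X_j\\U_{i,j}^p\emtx$ has full row rank. Combined with controllability, this gives full row rank of $X_j^f$. Alternatively, it follows from \eqref{eq:rankHank} together with Lemma~\ref{lem:assumption2rank}. Hence $\operatorname{rank}\mathcal{O}_i=n_x$, since $\Gamma_i$ is injective.

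Now take the compact SVD $\mathcal{O}_i=U_r\Sigma_rV_r^\top$. Both factorizations $\Gamma_i X_j^f$ and $(U_r\Sigma_r^{1/2})Z$ are full-rank factorizations of the same rank-$n_x$ matrix. Therefore $\operatorname{row}(Z)=\operatorname{row}(X_j^f)$, so $Z=TX_j^f$ for a unique invertible $T$. Explicitly,
\[
T=\Sigma_r^{-1/2}U_r^\top\Gamma_i,
\]
and it is invertible because $T X_j^f X_j^{f\top}$ has rank $n_x$. Reading this column by column gives $z(k)=Tx(k)$ for $k=i,\ldots,i+j-1$.
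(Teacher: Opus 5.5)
Your proof is correct, relying on your second justification in Step~4. It takes a genuinely different route from the paper. The paper derives nothing itself. It checks the two hypotheses of Theorem~2 in \cite{overschee1996subspace}, namely PE of order $2i$ and $\operatorname{row}(U_{i,j}^f)\cap\operatorname{row}(X_j)=\{0\}$ (the latter via Lemma~\ref{lem:assumption2rank}), and then quotes that theorem's conclusions for the rank of $\mathcal{O}_i$ and for the state reconstruction. You instead reprove the deterministic N4SID result. You show $X_j^f$ lies in the row space of $W_p=\bmtx U^p_{i,j}\\ Y^p_{i,j}\emtx$, evaluate the Moore--Penrose formula \eqref{eq:obliqueO} directly to get $\mathcal{O}_i=\Gamma_iX_j^f$, and finish with a full-rank factorization and the explicit $T=\Sigma_r^{-1/2}U_r^\top\Gamma_i$.

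The transversality you need is $\operatorname{row}(W_p)\cap\operatorname{row}(U^f_{i,j})=\{0\}$. This is the intermediate fact $\mathcal{V}=\{0\}$ in the appendix proof, not the stated conclusion of Lemma~\ref{lem:assumption2rank}. It follows directly from \eqref{eq:rankHank} via $2in_u+n_x=\operatorname{rank}\bmtx W_p\\ U^f_{i,j}\emtx\le\operatorname{rank}W_p+\operatorname{rank}U^f_{i,j}\le(in_u+n_x)+in_u$. The same squeeze gives $\operatorname{rank}W_p=in_u+n_x$, so PE plays no role in this step.

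The paper's route is shorter, but it outsources the work to an external theorem whose hypothesis is phrased in terms of the unmeasured past state. Yours is self-contained and confirms that \eqref{eq:obliqueO} really computes the oblique projection. It also makes explicit where $\operatorname{rank}X_j^f=n_x$ comes from. Finally, it shows that condition~2 plus minimality suffices. Indeed, \eqref{eq:rankHank} forces $\bmtx U^p_{i,j}\\ X_j\\ U^f_{i,j}\emtx$ to have full row rank. Hence $U^p_{2i,j}=\bmtx U^p_{i,j}\\ U^f_{i,j}\emtx$ has full row rank too, so condition~1 is implied.

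One fix in Step~4: drop the first justification. Corollary~2 of \cite{WILLEMS2005325}, applied with PE of order $2i$ on $\hat N$ samples, gives full row rank of stacked state/input matrices with more than $j$ columns. It does not give full row rank of $\bmtx X_j\\ U^p_{i,j}\emtx$. In fact, PE of order $2i$ plus controllability does not imply this. Take a scalar system ($n_x=n_u=1$) with $i=2$, $j=4$, $x(0)=0$ and $u=(0,0,0,1,0,0,0)$. Then $U^p_{4,4}$ is the anti-identity, yet $X_j=0$.

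Your alternative is the right argument. Since $W_p=\bmtx I&0\\ H_i&\Gamma_i\emtx\bmtx U^p_{i,j}\\ X_j\emtx$ has rank $in_u+n_x$, the matrix $\bmtx U^p_{i,j}\\ X_j\emtx$ has full row rank. Then $X_j^f=\bmtx A^i&\Delta_i\emtx\bmtx X_j\\ U^p_{i,j}\emtx$ has rank $n_x$, because $(A,B)$ is controllable and $i\ge n_x$.
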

\begin{proof}
This result follows from Theorem 2 in 
\cite{overschee1996subspace}. Specifically, this theorem has two key assumptions. First, 
Theorem 2 in  \cite{overschee1996subspace}.
requires the input to be PE of order $2i$. This follows from condition 1) in the lemma statement.\footnote{Definition 5 in \cite{overschee1996subspace} states that the input is PE of order $2i$ if $U^p_{2i,j} (U^p_{2i,j})^\top$ is full rank equal to $2i$.  This is equivalent to $U^p_{2i,j}$ having full row rank equal to $2i$. Hence the definition of PE in \cite{overschee1996subspace} is equivalent to the definition of PE from~\cite{WILLEMS2005325} as used in our paper.}
Second, Theorem 2 in 
\cite{overschee1996subspace}. requires $\operatorname{row}(U_{i,j}^f)\cap \operatorname{row}(X_j)=\{0\}$. This follows condition 2) and  Lemma~\ref{lem:assumption2rank}. The second conclusion of Theorem 2 in 
\cite{overschee1996subspace} is that 
$\mathcal{O}_{i}$ has $n_x$ non-zero singular values. The state reconstruction $Z$ follows from conclusions 4 and 5 of Theorem 2 in 
\cite{overschee1996subspace}.
\end{proof}
\vspace{0.1in}

Note that  Theorem 2 in 
\cite{overschee1996subspace}
requires $\operatorname{row}(U_{i,j}^f)\cap \operatorname{row}(X_j)=\{0\}$.  However, state data is needed to verify this assumption.
Lemma~\ref{lem:assumption2rank} is used to verify
this assumption without state data. Specifically, this lemma only requires  the rank  condition
\eqref{eq:rankHank} involving input/output data. 
The columns of $Z$ correspond to the state data for the system  $(TAT^{-1}, TB, C T^{-1},D)$. In other words, $Z$ contains the state data reconstructed up to a similarity transformation.  

Next, we can  define stacked data input, output, and reconstructed state data similar to \eqref{eq:data}, but starting from time $k=i$ rather than $k=0$. If $j\ge 2$, then the data matrices with length $N=j-1$ are defined as:
\begin{align}
\label{eq:datai}
\begin{split}
Z_{i,N} &= \bmtx z(i) & z(i+1) & \cdots & z(i+N-1) \emtx \\
Z_{i,N}^+ &= \bmtx z(i+1) & z(i+2) & \cdots & z(i+N) \emtx. \\
W_{i,N}&= \bmtx w(i) & w(i+1) & \cdots & w(i+N-1) \emtx, \\
D_{i,N}&= \bmtx d(i) & d(i+1) & \cdots & d(i+N-1) \emtx, \\
V_{i,N}&= \bmtx v(i) & v(i+1) & \cdots & v(i+N-1) \emtx, \\
E_{i,N}&= \bmtx e(i) & e(i+1) & \cdots & e(i+N-1) \emtx.
\end{split}
\end{align}
All these matrices have $N=j-1$ columns.

Define the following matrix function $L_{IOD}: \Sym^{n_x} \times \Sym^{2m} \times \R_{>0} \times \R_{>0} \to \Sym^{N}$
using these data matrices:
\begin{align}
\label{eq:LIOD}
& L_{IOD}(P,M,\gamma^2,\epsilon):=\,
\epsilon \bmtx Z_{i,N} \\ W_{i,N} \\ D_{i,N} \emtx^\top \bmtx Z_{i,N} \\ W_{i,N} \\ D_{i,N} \emtx \\
\nonumber
& \hspace{0.2in}
- Z_{i,N}^\top\, P\, Z_{i,N} + {Z_{i,N}^+}^\top\, P\, Z_{i,N}^+\\
\nonumber
& \hspace{0.2in}
- \gamma^2\, D_{i,N}^\top\, D_{i,N}
 + E_{i,N}^\top\, E_{i,N}
+  \bmtx V_{i,N} \\ W_{i,N}\emtx^\top
M
  \bmtx V_{i,N} \\ W_{i,N}\emtx.
\end{align}
The subscript $IOD$ indicates that this will be part of the input-output data-driven formulation, which uses measured input-output data and does not require state measurements.

\vspace{0.1in}
\begin{theorem}
\label{thm:lifted_io_state}
Consider the interconnection $F_U(G,\Delta_\Phi)$, where the nominal LTI system $G$ in~\eqref{eq:LTInom} is minimal and $\Phi:\mathbb{R}^m\to\mathbb{R}^m$ is a memoryless nonlinearity satisfying the QC defined by $M$. Assume the interconnection is well-posed as in Definition~\ref{def:wellposed}.

Let stacked input data $\{u(k) \}_{k=0}^{\hat{N}-1}$ be given along with integers
$j\ge 2$ and $i=n_x+1$ such that $\hat{N}=2i+j-1$.
Assume the data satisfies the two conditions in Lemma~\ref{lem:subspace}.  Let $Z$ denote the state matrix reconstructed via the procedure in Lemma~\ref{lem:subspace}. 
Let $Z_{i,N}, Z_{i,N}^+, W_{i,N}, D_{i,N}, V_{i,N},$ and $E_{i,N}$ be data matrices, defined in \eqref{eq:datai} with $N=j-1$, constructed from trajectories of $G$ excited by this stacked input. Finally, assume the subset of data $\{u(k)\}_{k=i}^{i+N-1}$ is PE of order $n_x+1$.

Then the interconnection $F_U(G,\Delta_\Phi)$ is internally stable and satisfies $\|F_U(G,\Delta_\Phi)\|_{2\to 2} < \gamma$ if there exists $P \succ 0$, $\gamma > 0$ and $\epsilon>0$ such that
$L_{IOD}(P,M,\gamma^2,\epsilon) \preceq 0$.
\end{theorem}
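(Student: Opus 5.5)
The plan is to reduce the statement to Theorem~\ref{thm:stabMF}, applied to a similarity-transformed realization of $G$ and to the data segment that starts at time $i$. By Lemma~\ref{lem:subspace}, whose two conditions hold by assumption, there is an invertible $T$ with $z(k)=Tx(k)$ for $k=i,\dots,i+j-1$. Hence the columns of $Z_{i,N}$ and $Z_{i,N}^+$ are the state data of the realization
\[
(\tilde A,\tilde B,\tilde C,\tilde D):=(TAT^{-1},\,TB,\,CT^{-1},\,D),
\]
with $\tilde B=[TB_1,\;TB_2]$ and $\tilde C$ partitioned like $C$. Note that $Z^+_{i,N}$ uses $z(i+N)=z(i+j-1)$, which is still a column of $Z$, so $j\ge2$ suffices. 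This realization is controllable, because $(A,B)$ is and controllability is invariant under similarity. It also generates exactly the same input/output trajectories $(w,d,v,e)$, so the interconnection $F_U(\tilde G,\Delta_\Phi)$ has the same well-posedness, internal stability and induced gain as $F_U(G,\Delta_\Phi)$. Internal stability transfers since $x(k)=T^{-1}z(k)\to0$ if and only if $z(k)\to0$.

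Next, I would observe that the data matrices in \eqref{eq:datai} are a length-$N$ trajectory of $\tilde G$ starting from $z(i)$. Time invariance makes the initial time immaterial, so one can reindex $k\mapsto k-i$. Under this identification, $L_{IOD}(P,M,\gamma^2,\epsilon)$ is exactly $L_{SSD}(P,M,\gamma^2,\epsilon)$ from \eqref{eq:LSSD}, built from this trajectory of $\tilde G$. The input segment $\{u(k)\}_{k=i}^{i+N-1}$ is PE of order $n_x+1$ by assumption. Therefore Theorem~\ref{thm:stabMF}, applied to $\tilde G$, shows that $L_{IOD}\preceq0$ is equivalent to
\[
\tilde L_{MB}(P,M,\gamma^2,\epsilon)\preceq0,
\]
where $\tilde L_{MB}$ is \eqref{eq:LMB} with the tilde matrices. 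Concretely, $L_{IOD}=\tilde R^\top \tilde L_{MB}\tilde R$ with $\tilde R=[Z_{i,N}^\top\; W_{i,N}^\top\; D_{i,N}^\top]^\top$ of full row rank, and the pseudoinverse argument from that proof applies.

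Finally, Lemma~\ref{lem:stabMB} applied to $\tilde G$ with the same $P\succ0$, $\gamma$ and $\epsilon$ gives internal stability and $\|F_U(\tilde G,\Delta_\Phi)\|_{2\to2}<\gamma$, and both properties transfer to $F_U(G,\Delta_\Phi)$ by the first paragraph. As a side remark, the corresponding certificate in the original coordinates is $T^\top P T$, although it is not needed.

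The main obstacle is to check carefully that every hypothesis of Theorem~\ref{thm:stabMF} holds for the shifted, transformed data. Two points need care. First, the rank result from \cite{WILLEMS2005325} requires the state trajectory to be consistent with the realization: $z(k+1)=\tilde A z(k)+\tilde B u(k)$ must hold for all columns used, and this follows from $z=Tx$ on the whole index range. Second, the PE assumption must be on the shifted window, which is exactly what the statement assumes. Beyond these checks, the argument consists only of substitutions.
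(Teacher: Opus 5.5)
Your argument is correct, and it differs from the paper's mainly in which realization Theorem~\ref{thm:stabMF} is applied to.

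The paper stays in the original coordinates. It substitutes $Z_{i,N}=TX_{i,N}$ and $Z_{i,N}^+=TX_{i,N}^+$ into $L_{IOD}$ and reads the result as the shifted SSD condition for $G$ with certificate $T^\top P T$. It then invokes Theorem~\ref{thm:stabMF} and Lemma~\ref{lem:stabMB} for $G$ itself. The cost is the $\epsilon$-term, which becomes $\epsilon X_{i,N}^\top T^\top T X_{i,N}$ instead of $\epsilon X_{i,N}^\top X_{i,N}$. The paper must therefore shrink $\epsilon$ to some $\tilde\epsilon$, and it obtains only a one-way implication. To dominate both this term and the $W_{i,N}$, $D_{i,N}$ blocks, a safe choice is $\tilde\epsilon=\epsilon\min\{1,\sigma_{\min}(T)^2\}$, rather than the paper's stated $\lambda_{\min}(T)^2\epsilon$.

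You instead work in the reconstructed coordinates. There $L_{IOD}$ is literally the SSD function of $\tilde G=(TAT^{-1},TB,CT^{-1},D)$ with the same $(P,\gamma,\epsilon)$. This gives the exact equivalence $L_{IOD}\preceq0\iff\tilde L_{MB}\preceq0$ without any rescaling. The only extra ingredient is that well-posedness, internal stability and the induced gain are invariant under $z=Tx$, because closed-loop trajectories correspond bijectively with identical $(v,w,d,e)$. You supply this.

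Your route is the cleaner of the two. The paper's has the small advantage of exhibiting the certificate $T^\top P T$ for the original realization. As you remark, that certificate can be recovered from yours by a congruence, but it is not needed for the conclusion.
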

\vspace{0.1in}
\begin{proof}
Let $L_{SSD}^i(P,M,\gamma^2,\epsilon) \preceq 0$ denote the SSD condition but with $N$ columns of data starting at $k=i$ rather than $k=0$, i.e. $X_N$ replaced by $X_{i,N}$, etc. 
By Lemma~\ref{lem:subspace}, there exists a nonsingular matrix $T\in \R^{n_x\times n_x}$ such that $z(k)=T\,x(k)$. Define $\tilde{\epsilon}:= \lambda_{\min}(T)^2 \epsilon$.  Then  the I/O data condition $L_{IOD}(P,M,\gamma^2,\epsilon) \preceq 0$ implies that
$L_{SSD}^i(T^\top P T,M,\gamma^2,\tilde{\epsilon}) \preceq 0$.
The system is time-invariant, so we can equivalently use the data starting at $k=i$ rather than $k=0$. In other words, $
L_{SSD}^i(T^\top P T,M,\gamma^2,\tilde{\epsilon}) \preceq 0$  if and only if $L_{SSD}(T^\top P T,M,\gamma^2,\tilde{\epsilon}) \preceq 0$.
Theorem~\ref{thm:stabMF} holds because the subset of stacked input data $\{u(k)\}_{k=i}^{i+N-1}$ is PE of order $n_x+1$. It follows that the system is internally stable and$\|F_U(G,\Delta_\Phi)\|_{2\to 2} < \gamma$.
\end{proof} 
\vspace{0.1in}

Theorem~\ref{thm:lifted_io_state} has two PE requirements.  The first is that the
stacked input data $\{u(k) \}_{k=0}^{\hat{N}-1}$ 
must be PE of order $2i$ with $i=n_x+1$. The second is that the subset of data from $k=i$ to $k=i+N-1$ must be PE of order $n_x+1$. The upper index of this data subset is equivalent to $k=i+N-1=\hat{N}-i-1$. Hence the data subset in the second requirement trims the first and last $i$ samples of data. It is possible to construct input sequences that satisfy the first PE requirement but not the second requirement.\footnote{Consider the case with $n_u=2$, $i=2$, and $\hat{N}=11$. Define the following sequence for $\{u(k)\}_{k=0}^{\hat{N}-1}$:
\begin{align*}
\left\{ \bmtx 0 \\ 0 \emtx, 
\bmtx 0 \\ 1 \emtx, 
\bmtx 4 \\ 0 \emtx, 
\bmtx 10 \\ 0 \emtx, 
\bmtx 30 \\ 0 \emtx, 
\bmtx 100 \\ 0 \emtx 
\bmtx 354 \\ 0 \emtx,
\bmtx 1300 \\ 0 \emtx, 
\bmtx 4890 \\ 0 \emtx, 
\bmtx 0 \\ 1 \emtx,
\bmtx 0 \\ 0 \emtx \right\} 
\end{align*}
Consider the Hankel matrix for this sequence  $U_{2i,j}^p$ with $j=8$. This matrix has $2in_u=8$ rows and has (full) rank equal to 8. Hence the data is PE of order $2i=4$. Next, consider the data subset $\{u(k)\}_{k=i}^{\hat{N}-i-1}$ with  $\hat{N}-2i=7$ elements, i.e. the first and last $i=2$ entries have been removed. Consider the  Hankel matrix for this subset $U_{i,j}^p$ with $j=6$. This matrix has $in_u=4$ rows but the rank is only 2. Hence this subset is not PE of order $i$.} This is the reason that the second PE requirement is needed in Theorem~\ref{thm:lifted_io_state}.

The IOD formulation generally requires a longer data trajectory than the SSD formulation. We need the stacked input data $\{u(k) \}_{k=0}^{\hat{N}-1}$ to be PE of order $2i$ with $i=(n_x+1)$ so that the state can be reconstructed.   This PE condition implies that Hankel matrix $U_{2i,j}^p$ has at least $j\ge 2in_u$ columns.  Thus the total data length must satisfy $\hat{N}\ge 2i+2in_u-1$ with $i=(n_x+1)$. Simplifying yields the following bound on the required data: 
$\hat{N} \ge 2(n_x n_u + n_x + n_u)+1$.
We also need the stacked matrix in~\eqref{eq:rankHank} to have at least
$j \ge 2in_u+n_x$ columns to satisfy the rank condition.  This is an additional $n_x$ columns of data as compared to the bound implied by the PE condition.  Hence the total data length must satisfy the stronger condition $\hat{N} \ge 2n_xn_u + 3n_x + 2n_u+1$.

In comparison, the SSD formulation requires a data length of at least $n_xn_u+n_x+n_u$. Thus, eliminating the need for state
measurements comes at the cost of a larger minimum data requirement. These bounds are only necessary dimensional conditions; longer trajectories may be required for the
corresponding rank conditions to hold in practice.

\section{Numerical examples}
\label{sec:example}

In this section, we use the conditions in Theorem~\ref{thm:stabMF} and Theorem~\ref{thm:lifted_io_state} to analyze the induced-$\ell_2$ gain for an example system.
Consider the closed-loop system in Figure~\ref{fig:closedloop_parasitic}, formed by an LTI plant $G_p$ and a controller $G_c$. The closed-loop includes  a parasitic nonlinearity $\Delta_\beta$. 
Existing data-driven stability and performance methods that consider only linear systems cannot capture the effect of the nonlinearity. We will use the SSD and IOD conditions to analyze this system.

The plant $G_p$ is  a discrete-time, second-order LTI system:
\begin{align*}
\begin{split}
    x_p(k+1)& = A_p\, x_p(k) + B_{pw}\, w(k) + B_{pd}\, d(k), \\
    e(k)& = -C_p\, x_p(k),
\end{split}
\end{align*}
with the following state matrices
\begin{align*}
\begin{split}
    A_p &= \bmtx 0.90 & 0.08 \\ -0.03 & 0.82 \emtx, \quad
    B_{pw} = 0.11 \times \bmtx 1 & 0.2 \\ 0.1 & 1 \emtx, \\
    B_{pd} &= \bmtx 0.15 & 0 \\ 0 & 0.15 \emtx, \quad C_p=I_2.
\end{split}
\end{align*}
The controller $G_c$ is also a discrete-time, second-order LTI system:
\begin{align*}
\begin{split}
    x_c(k+1)& = A_c\, x_c(k) + B_c\, e(k), \\
    v(k)& = C_c\, x_c(k) + D_c\, e(k),
\end{split}
\end{align*}
with the following state matrices
\begin{align*}
\begin{split}
    A_c &= 0.98 \times I_2, \quad B_c = 0.1 \times I_2, \\
    C_c &= \bmtx 0.25 & 0 \\ 0 & 0.35 \emtx, \quad D_c = \bmtx 0.40 & 0 \\ 0 & 0.60 \emtx.
\end{split}
\end{align*}
The state of the nominal closed-loop LTI system is
$x=\bmtx x_p^\top & x_c^\top\emtx^\top$, and hence $n_x=4$.

Finally, the parasitic nonlinearity  $\Delta_\beta:\ell_{2e}^2\to \ell_{2e}^2$  is defined by a static nonlinearity  $\Phi_\beta:\R^2 \to \R^2$ at each point in time: $w(k)=\Phi_\beta(v(k))$ for all $k\ge 0$. Moreover, the static nonlinearity is a repeated function, i.e., there  exists $\phi_\beta:\R\to \R$  such that $w=\Phi_\beta(v)$ is defined entrywise by $w_i=\phi_\beta(v_i)$ for $i=1,\ldots,m$. We assume the nonlinearity is sector bounded by $[1-\beta,1+\beta]$: 
\begin{align*}
( (1+\beta) v - w ) \, (w - (1-\beta) v) \ge 0
\;\;
\forall v\in\R, \; w=\phi_\beta(v).
\end{align*}
In other words, the graph of $\phi_\beta$ lies between lines of slope $(1-\beta)$ and $(1+\beta)$.  Hence  $\beta>0$ quantifies the amount of nonlinearity around a line of slope equal to one. Larger values of $\beta$ correspond to a wider sector and allow the nonlinearity to have a stronger effect on the system.

It is well known (Section 8.1 of \cite{boyd1994linear} or Section VI.I of \cite{megretski97}) that $\Phi_\beta$ satisfies the QC defined by any matrix in the following set: 
\begin{align*}
 \mathcal{M} := \left\{   M:= \bmtx -(1-\beta^2)\Lambda & \Lambda\\ \Lambda & -\Lambda \emtx \, : \,
\Lambda = \bmtx \lambda_1 & 0 \\ 0 & \lambda_2 \emtx\succeq 0
 \right\}.
\end{align*}

\begin{figure}[t]
\centering
\scalebox{0.82}{
\begin{picture}(205,95)(15,10)
\thicklines
\put(30,50){\circle{8}}
\put(32,35){$-$}
\put(8,53){$0$}
\put(11,50){\vector(1,0){15}}
\put(42,53){$e$}
\put(34,50){\vector(1,0){21}}
\put(55,35){\framebox(30,30){$G_c$}}
\put(92,53){$v$}
\put(85,50){\vector(1,0){28}}
\put(113,35){\framebox(32,30){$\Delta_\beta$}}
\put(152,53){$w$}
\put(145,50){\vector(1,0){28}}
\put(173,36){\framebox(40,40){$G_p$}}
\put(152,85){$d$}
\put(154,82){\line(0,-1){15}}
\put(154,67){\vector(1,0){19}}
\put(213,56){\line(1,0){15}}
\put(228,56){\line(0,-1){41}}
\put(228,15){\line(-1,0){198}}
\put(30,15){\vector(0,1){31}}
\end{picture}
}
\caption{Closed-loop system with an LTI plant $G_p$, a controller $G_c$, and a parasitic nonlinearity $\Delta_\beta$ between them.}
\label{fig:closedloop_parasitic}
\end{figure}

\begin{figure}[h!]
  \centering
  \includegraphics[width=0.4\textwidth]{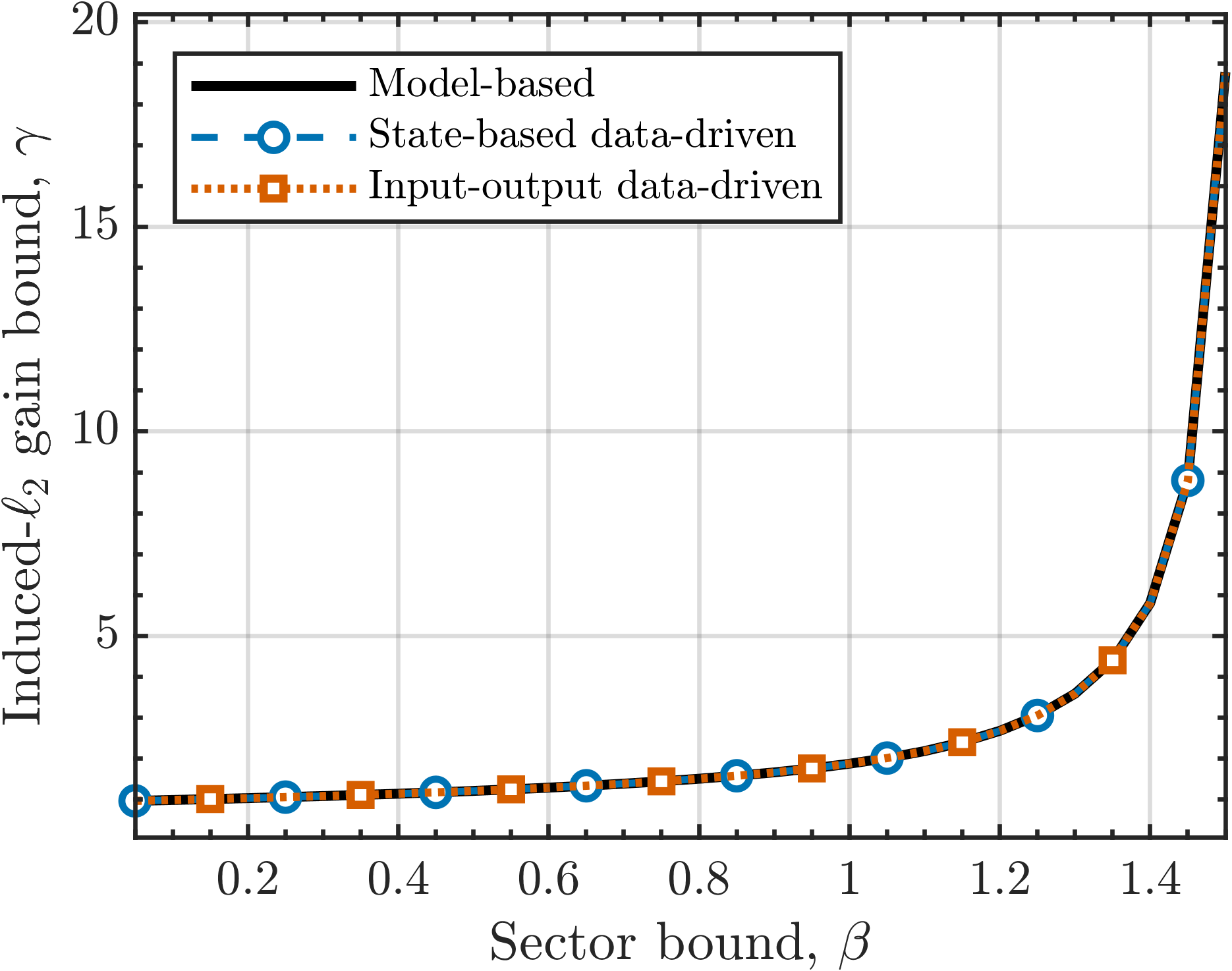}
  \caption{Induced-$\ell_2$ gain bounds as a function of $\beta$. The model-based, state-based data-driven, and input-output data-driven formulations recover identical gain bounds.}
\label{fig:GammaBeta}
\end{figure}

For both data-driven formulations, the trajectories are generated
in closed loop by exciting only the input
$d$, while signal $w$ is determined by the feedback
interconnection. Specifically, each entry of $d$ is generated
as
\begin{align*} d_r(k)=g_r(k)+0.2\sin\!\left((0.11+0.017r)k+0.31r\right),
\end{align*}
for $r=1,\ldots,n_d$, where $g_r(k)\sim\mathcal{N}(0,1)$ are independent Gaussian
samples generated using \texttt{randn} in MATLAB. For each value
of $\beta$, the closed-loop data are generated using the repeated
memoryless nonlinearity
\begin{align*}
 w_r(k)=v_r(k)+\beta v_r(k)\sin(v_r(k)),
\end{align*}
for $r=1,\ldots,m$. Since $\sin(v_r)\in[-1,1]$, it follows that
$1-\beta\leq\frac{w_r}{v_r}\leq 1+\beta$
for $v_r\neq0$, while the corresponding sector inequality also
holds at $v_r=0$. Therefore, $\Phi_\beta$ belongs to the sector
$[1-\beta,1+\beta]$, consistent with the uncertainty class used
in the analysis. 
After collecting the data, we verify that
$u(k)=\bmtx w(k)^\top & d(k)^\top \emtx^\top$
is PE of order $n_x+1$ for
Theorem~\ref{thm:stabMF} and of order $2(n_x+1)$ for
Theorem~\ref{thm:lifted_io_state}, while also satisfying the rank condition in~\eqref{eq:rankHank}.

For the input-output data-driven condition in Theorem~\ref{thm:lifted_io_state}, only the input-output trajectories are used, and a state sequence is reconstructed through the deterministic subspace-identification procedure in Lemma~\ref{lem:subspace}. The conditions in Lemma~\ref{lem:subspace} are verified using similar PE signals.  The required data lengths are $24$ for Theorem~\ref{thm:stabMF} and $53$ for Theorem~\ref{thm:lifted_io_state}. Both of these data sets satisfy the minimum data length assumptions described after Theorems \ref{thm:stabMF}
 and \ref{thm:lifted_io_state}.

The optimal (minimal) gain bound $\gamma$ is obtained by solving the following semidefinite program (SDP) with $\epsilon=10^{-8}$:
\begin{align*}
    \min_{M\in \mathcal{M},P\succ 0, \gamma^2>0} \gamma^2 
    \,\, \mbox{ s.t. } \,\,
    LMI_{\bullet}(P,M,\gamma^2,\epsilon) \preceq 0
\end{align*}
The $\bullet$ denotes that we solve this SDP using the $MB$, $SSD$, and $IOD$ LMI constraints in Lemma~\ref{lem:stabMB}, Theorem~\ref{thm:stabMF}, and Theorem~\ref{thm:lifted_io_state}, respectively. We solve this SDP using CVX \cite{cvx14} as the interface and MOSEK \cite{MOSEK} as the underlying solver. The analysis is repeated for $\beta \in \{0.1,0.15,\ldots,1.5\}$. We also solve with $\beta=0$ as the linear baseline. 

Figure~\ref{fig:GammaBeta} shows the results of the induced gain-bounds $\gamma$ as a function of sector size $\beta$.
As expected, the data-driven conditions SSD and IOD both recover the same optimal induced-$\ell_2$ gain bounds as the model-based condition MB over the full range of sector bounds.  As shown in Figure~\ref{fig:GammaBeta}, the induced-$\ell_2$ gain bound changes noticeably as the sector becomes larger. The case $\beta=0$ corresponds to the baseline linear system with $w=v$. In this case, the gain bound reduces to the $H_\infty$ norm of the transfer matrix from $d$ to $e$, which is $\gamma=0.951$. The gain bounds from the MB, SSD, and IOD conditions converge to this linear gain as $\beta\to 0$. The average computation times for each bound in the $\beta$-grid are $0.167$, $0.175$, and $0.196$ seconds for the MB (Lemma~\ref{lem:stabMB}), SSD (Theorem~\ref{thm:stabMF}), and IOD (Theorem~\ref{thm:lifted_io_state}) conditions, respectively.

\section{Conclusions}

This paper developed two data-driven stability and performance conditions for discrete-time Lurye systems with nonlinearities characterized by known quadratic constraints (QCs). The first condition uses measured input, output, and state trajectories of the nominal system. It is equivalent to the model-based condition under a persistency of excitation assumption on the input data. The second formulation requires only input/output data. The state is reconstructed, up to a similarity transformation, using deterministic subspace-identification methods. The resulting conditions were formulated as convex semidefinite programs and validated through a numerical example. Future work includes extending the framework to noisy trajectory data and incorporating dynamic integral quadratic constraints (IQCs). We can also merge our formulation with recent complementary results that assume the nominal dynamics are known but use measured data to learn QCs. This would yield a fully data-driven framework for Lurye systems.


\section{Acknowledgments}

The authors acknowledge AFOSR Grant \#FA9550-23-1-0732 for funding of this work.

\bibliographystyle{myIEEEtran}
\bibliography{references} 

\appendix 
\subsection{Proof of Lemma~\ref{lem:assumption2rank}}
\label{sec:proofLem2}

The data matrices satisfy the following relation (see Equation~(2.5) of~\cite{overschee1996subspace}):
\begin{align}
\label{eq:IODataRelation}
     Y_{i,j}^p = \Gamma_{i} X_j +  H_i U_{i,j}^p 
\end{align}
This expression gives
\begin{align*}
    \bmtx U_{i,j}^p \\ Y_{i,j}^p \emtx = \bmtx I & 0 \\ H_i & \Gamma_{i}\emtx \bmtx U_{i,j}^p \\ X_j \emtx.
\end{align*}
Therefore, by item~(c) of Section~0.4.5 in~\cite{horn2012matrix},
\begin{align}
\label{eq:upboundUY}
    \operatorname{rank}\left(\bmtx U_{i,j}^p \\ Y_{i,j}^p \emtx\right) \le \operatorname{rank}\left(\bmtx U_{i,j}^p \\ X_j \emtx\right) \le
    in_u+n_x.
\end{align}
The second inequality follows because the rank cannot exceed the number of rows. Next, apply the subspace intersection theorem in Section~0.1.7 of~\cite{horn2012matrix} to obtain:
\begin{align}
\nonumber
\operatorname{rank}\left(\bmtx U_{i,j}^p \\ Y_{i,j}^p \\ U_{i,j}^f \emtx\right) 
& = 
\operatorname{rank}\left(\bmtx U_{i,j}^p \\ Y_{i,j}^p \emtx\right) \\
& + \operatorname{rank}(U_{i,j}^f)
 -\operatorname{dim}\left(\mathcal{V} \right).
\label{eq:dimFormula}
\end{align}
where $\operatorname{dim}\left(\mathcal{V} \right)$ is the dimension of the following subspace:
\begin{align*}
 \mathcal{V}:=    \operatorname{row}\left(\bmtx U_{i,j}^p \\ Y_{i,j}^p \emtx\right) \cap \operatorname{row}\left(U_{i,j}^f\right)
\end{align*}
The left side of \eqref{eq:dimFormula}
 is equal to $2in_u+n_x$ by assumption.
The first term on the right side is $\le in_u+n_x$ by
\eqref{eq:upboundUY}. The second term on the right side satisfies $\operatorname{rank}(U_{i,j}^f) \le in_u$ because the rank cannot exceed the number of rows. 
Combining these facts with \eqref{eq:dimFormula}
gives $\operatorname{dim}\left(\mathcal{V} \right) \le 0$.
Hence the dimension of this intersection is identically equal to zero so that $\mathcal{V} = \{ 0\}$.

To complete the proof, note that $(A,C)$ is observable and $i>n_x$, both by assumption. Hence the extended observability matrix $\Gamma_i$ has full column rank $n_x$ (see  Section~6.2.1 of~\cite{kailath1980linear}). Therefore, the pseudoinverse of $\Gamma_i$ satisfies $\Gamma_i^\dagger \Gamma_i = I$ by Proposition~6.1.6 (xviii) of~\cite{bernstein09}.
Multiplying \eqref{eq:IODataRelation}
on the left by $\Gamma_i^\dagger$ and rearranging yields
\begin{align*}
    X_j 
    = \bmtx - \Gamma_{i}^\dagger H_i & \Gamma_{i}^\dagger\emtx \bmtx U_{i,j}^p \\ Y_{i,j}^p \emtx.
\end{align*}
Thus each row of $X_j$ is a linear combination of rows from the stacked past input/output data. It follows that
\begin{align*}
\label{eq:rowSubset}
    \operatorname{row}(X_j) \subseteq \operatorname{row}\left(\bmtx U_{i,j}^p \\ Y_{i,j}^p \emtx\right).
\end{align*}     
This fact combined with $\mathcal{V} = \{0\}$ implies
\begin{align*}
    \operatorname{row}(X_j) \cap \operatorname{row}(U_{i,j}^f) = \{0\}.
\end{align*}
\hfill $\blacksquare$

\end{document}